\numberwithin{equation}{section}
\newtheorem{theorem}{Theorem}[section]
\newtheorem{remark}[theorem]{Remark}
\newtheorem{proposition}[theorem]{Proposition}
 \newcommand{\pf}{\text{pf}}
 \newcommand{\Pf}{\text{Pf}}
 \newcommand{\md}{\mathcal{D}}
\newcommand{\p}{\partial}
\begin{document}

\title[Discrete integrable systems and condensation algorithms for Pfaffians]{Discrete integrable systems and condensation algorithms for Pfaffians}

\subjclass[2010]{{37K10, 15A15, 65D15}}
\date{}

\dedicatory{}
\author{Shi-Hao Li}
\address{ School of Mathematical and Statistics, ARC Centre of Excellence for Mathematical and Statistical Frontiers, The University of Melbourne, Victoria 3010, Australia}
\email{lishihao@lsec.cc.ac.cn}

\keywords{Discrete integrable systems,\, Pfaffian tau functions,\, condensation algorithm}
\begin{abstract}
Inspired by the connection between the Dodgson's condensation algorithm and Hirota's difference equation, we consider  condensation algorithms for Pfaffians from the perspectives of discrete integrable systems. The discretisation of Pfaffian elements demonstrate its effectiveness to the Pfaffian $\tau$-functions and discrete integrable systems. The free parameter in the discretisation allows us in particular to obtain explicit, one-parameter condensation algorithms for the Pfaffians.
\end{abstract}

\maketitle

\section{Introduction and statement of results}
Integrable combinatorics is a fascinating subject since it collects the ideas from integrable system, combinatorics, cluster algebra and so on. One of the main topic is the Dodgson's condensation algorithm proposed by Dodgson \cite{dodgson66},
based on the celebrated Jacobi identity
\begin{align*}
|A|\times |A^{1,n}_{1,n}|=|A^{1}_1|\times |A_n^n|-|A_n^1|\times |A_1^n|,
\end{align*}
where $|A_{i_1,\cdots,i_r}^{j_1,\cdots,j_r}|$ stands for the determinant of the the matrix obtained from $A$ by deleting its $i_1,\cdots,i_r$ rows and $j_1,\cdots,j_r$ columns. According to the Jacobi identity, choosing $\tau^{k,l}_n$ as an $n$-th order determinant
$\det(a_{i,j})$, where $i$ is from $k$ to $k+n-1$ and $j$ is from $l$ to $l+n-1$, specifies
the famous discrete Toda equation (or Hirota-Miwa equation, discrete KP equation (dKP) and $A_\infty$ T-system)
\begin{align} \label{toda}
\tau^{k,l}_{n+1}\tau^{k+1,l+1}_{n-1}=\tau_n^{k,l}\tau^{k+1,l+1}_n-\tau_n^{k+1,l}\tau^{k,l+1}_n.
\end{align}
One of the most important features of this integrable lattice is its exact solvability. That is, if we set enough initial values $\tau^{k,l}_0=1$ and $\tau_1^{k,l}=a_{k,l}$ for all $k,l\in\{0,\cdots,N-1\}$, then by iterating the equation \eqref{toda}, we can get the value of $\tau_{N}^{0,0}$, which is equal to the  $N$-th order determinant $\det(a_{k,l})_{k,l=0}^{N-1}$ (See Fig \ref{p} for a one-step process).
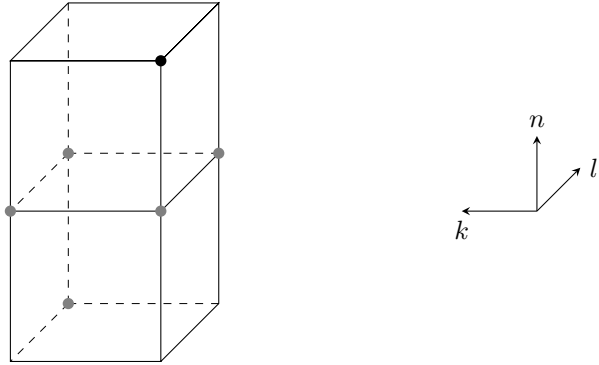
\begin{figure}[htbp]
\begin{center}
\begin{tikzpicture}
\newcommand*\kante{2}
\path[scale=\kante]
  (0,0,0)  coordinate   (A)
  (1,0,0)  coordinate   (B)
  (1,1,0)  coordinate   (C) 
  (0,1,0)  coordinate   (D)
  (0,0,-1) coordinate   (E)
  (1,0,-1) coordinate  (F)
  (1,1,-1) coordinate  (G)
  (0,1,-1) coordinate   (H)
   (0,2,0)  coordinate (A1)
  (1,2,0)  coordinate  (B1)
  (0,2,-1) coordinate   (E1)
  (1,2,-1) coordinate  (F1)

  (3.5,1,0)     coordinate (O)
  (3.5,1.5,0)  coordinate [label=above:$n$] (N)
  (3.5,1,-.75) coordinate [label=right:$l$](L)
  (3,1,0)  coordinate [label=below:$k$] (K)
  
;
\draw[font=\tiny]
(A) -- (A1) -- (B1) -- (B) -- cycle

(B) -- (F) -- (F1) -- (B1)

(D) -- (C) -- (G)

(A1) -- (B1) -- (F1)--(E1)--cycle
;

\path[dashed, very thin] (E) edge (A) edge (F) edge (H);
\path[dashed, very thin] (H) edge (D) edge (G) edge (E1);

\foreach \n in {D,H,C,G,E}
  \node at (\n)[circle,fill=gray,inner sep=1.5pt]{};
  
\foreach \l in{B1}
  \node at (\l)[circle,fill,inner sep=1.5pt]{};

\draw [->,>=stealth] (O) -- (N) ;
\draw [->,>=stealth] (O) -- (L);
\draw [->,>=stealth] (O) -- (K);

\end{tikzpicture}\caption{An graphic explanation of the Toda lattice: if one gets enough values at the bottom and middle levels, then one can get the value at the top by using the iterating formula.}\label{p}
\end{center}
\end{figure}

Later, the $\lambda$-determinant was introduced \cite{robbins86} as a generalisation of the ordinary determinant via a one-parameter Dodgson's condensation algorithm
\begin{align*}
T^{k,l}_{n+1}T^{k,l}_{n-1}=T^{k,l+1}_nT^{k,l-1}_n+\lambda T^{k+1,l}_nT^{k-1,l}_n,
\end{align*}
where $T^{k,l}_n$ is connected with $\tau^{k,l}_n$ via the affine transformation $\tau^{k,l}_{n}\mapsto T^{k-l,k+l+n}_n$.
The iteration process produces the corresponding $\lambda$-determinant. If one considers initial data \cite{difrancesco12}
\begin{align*}
&T^{l,m}_{0}=1, \qquad\qquad\qquad\qquad\qquad\,\, \, (l,m\in\mathbb{Z};\, l+m=n\,\text{mod}\,2)\\
&T^{i,j}_{1}=a_{(j-i+n+1)/2,(i+j+n+1)/2},\quad (i,j\in\mathbb{Z};\, i+j=n+1\,\text{mod}\, 2; \,|i|+|j|\leq n-1)
\end{align*}
then the solution $T_{n}^{0,0}:=|A|_\lambda$ is called as the $\lambda$-determinant. Moreover, the Laurent polynomial induced by the $\lambda$-determinant is related to the so-called alternating sign matrices, which is useful in the six-vertex model and thus attracted much attention from combinatorics and mathematical physics \cite{difrancesco12,kuo06,hone07}.

The cube recurrence
\begin{align}\label{btoda1}
\tau_n^{k+1,l+1}\tau_n^{k,l}-\tau_n^{k,l+1}\tau_n^{k+1,l}=\tau_{n-1}^{k+1,l+1}\tau_{n+1}^{k,l}-\tau_{n+1}^{k+1,l}\tau_{n-1}^{k,l+1},
\end{align}
is another discrete dynamical system admitting Laurent property \cite{fomin02} and it has two different guises. One was given by Miwa \cite{miwa82} as an analogy of Hirota's difference equation. Later, the same recurrence relation was introduced by Propp \cite{propp01} and studied from the point of view of algebraic combinatorics \cite{carroll04,fomin02}.

From the perspectives of integrable system, one can always expect that the $\tau$-functions of the $B$-type lattice could be expressed as Pfaffians \cite{hirota04} and it is natural to ask whether one can compute the value of Pfaffian via the recurrence relation, i.e. to find a condensation algorithm for Pfaffian. Although we will show that equation \eqref{btoda1} has Pfaffian solutions, the scheme is not explicit for iteration. Therefore, in this text, we mainly investigate the condensation algorithms for Pfaffians. For this purpose, we introduce 
two different integrable lattices to compute the value of Pfaffians. One is the B\"acklund transformation of the cube recurrence/dBKP equation, which rotates the vertices and make it explicit to iterate. Another is a so-called D-type Toda lattice which was proposed when investigating a convergent acceleration algorithm for Pfaffian sequence transformations. One-parameter recurrence relations are also proposed. With proper discrete time evolutions, we show that these integrable lattices have closed form solutions with a parameter, which was called a relaxation factor in the algorithm.

We organise the article as following. In Section \ref{intr}, we give some brief introductions to Pfaffian for later use.
Next, the condensation algorithms are given in Section \ref{ca}. Since the discrete evolutions of the solutions admit a one-parameter deformation, we introduce relaxation factors in the algorithm. As a by-product, in Section \ref{dc}, we give an explicit Pfaffian solution to the discrete CKP equation, demonstrating the effectiveness of this Pfaffian technique, which is powerful in discretisation of Pfaffian $\tau$-function as well as integrable systems. Some concluding remarks are given at the end.

\section{A gentle introduction to Pfaffian}\label{intr}
Given a skew-symmetric matrix of order $2N$, $A:=(a_{i,j})_{i,j=1}^{2N}$, then the Pfaffian of $A$ is given by 
\begin{align*}
\Pf(A)=\sum_P (-1)^P a_{i_1,i_2}\cdots a_{i_{2N-1},i_{2N}}.
\end{align*}
The summation means the sum over all possible combinations of pairs selected from $1$ to $2N$ satisfying
$i_{2l-1}<i_{2l+1}$ and $i_{2l-1}<i_{2l}$. The factor $(-1)^P$ takes the value $+1$ ($-1$) if the sequence $i_1$ to $i_{2N}$ is an even (odd) permutation of $1$ to $2N$. Here we take the notation 
\begin{align*}
\Pf(A):=\Pf(1,\cdots,2N)=\sum_P (-1)^P \Pf(i_1,i_2)\cdots \Pf(i_{2N-1},i_{2N})
\end{align*}
with $\Pf(i,j)=a_{i,j}$. From the definition, one can find the following explicit expansion
\begin{align}\label{exp}
\Pf(1,\cdots,2N)=\sum_{j=2}^{2N}(-1)^{j}\Pf(1,j)\Pf(2,\cdots,\hat{j},\cdots,2N),
\end{align}
where $\hat{j}$ denotes that the index $j$ is omitted. If one computes the value of Pfaffian by directly using \eqref{exp}, then the cost of this algorithm is about $O(N!)$ floating-point operation. Therefore, there have been some efficient numerical algorithms for computing the Pfaffians \cite{wimmer12,rubow11} whose computational cost is of $O(N^3)$. One idea is to find the canonical form by making the use of skew LU decomposition given by \cite{bunch82}: it is well known that for a $2n\times 2n$ skew-symmetric matrix $A$ and an arbitrary $2n\times 2n$ matrix $B$, there holds
\begin{align}\label{trans}
\Pf(BAB^{T})=\det(B)\Pf(A).
\end{align}
Therefore, to compute the Pfaffian of $A$ is to compute the value of a $2\times 2$ block diagonal matrix after the basic column/row transformations. 

Some basic properties of Pfaffian are given for later use. One is the bilinear identities, as an analogy of the determinant identities such as Jacobi identity and Pl\"ucker relation. These identities play important roles in many different contexts of soliton theory \cite{hirota04,ohta92,ohta04} and algebraic combinatorics  \cite{eisenkolbl13,knuth95,stembridge90,yan07}.
Although some of the low order Pfaffian identities was shown in \cite{tanner78},  the most general cases were given in Prof Yasuhiro Ohta's PhD thesis \cite{ohta92}. The bilinear identities are written as
\begin{subequations}
\begin{align}
\Pf(a_1,\cdots,a_{m},\star)\Pf(\star)&=\sum_{j=2}^{m}(-1)^j \Pf(a_1,a_j,\star)\Pf(a_2,\cdots,\hat{a}_j,\cdots,a_{m},\star),\label{even}\\
\Pf(a_1,\cdots,a_m,\ast)\Pf(\ast,2n)&=\sum_{j=1}^{m}(-1)^{j-1}\Pf(a_j,\ast)\Pf(a_1,\cdots,\hat{a}_j,\cdots,a_m,\ast,2n),\label{odd}
\end{align}
\end{subequations}
where $\{\star\}=\{1,\cdots,2n\}$ and $\{\ast\}=\{1,\cdots,2n-1\}$.

Since the discrete integrable system and discrete Pfaffian tau functions are the main objects considered, some basic formulae for the discrete Pfaffian elements should be introduced. One is the discrete Gram-type Pfaffian. If $\Pf(i^\ast,j^\ast)=\Pf(i,j)+\lambda \Pf(a,b,i,j)$ ($\lambda\in\mathbb{C}\backslash\{0\}$) and $\Pf(a,b)=0$, then
\begin{align}\label{gram1}
\Pf(1^\ast,\cdots,2n^\ast)=\Pf(1,\cdots,2n)+\lambda\Pf(a,b,1,\cdots,2n).
\end{align}
Moreover, if we further have $\Pf(a,i^\ast)=\Pf(a,i)+\lambda\Pf(b,i)$, then
\begin{align}\label{gram2}
\Pf(a,1^\ast,\cdots,2n-1^\ast)=\Pf(a,1,\cdots,2n-1)+\lambda\Pf(b,1,\cdots,2n-1).
\end{align}
Another discrete Pfaffian is called the Wronski-type, which is also called as the addition formula for Pfaffians  \cite{hirota04,ohta04}.
The elements in this case satisfy $$\Pf(i^\ast,j^\ast)=\lambda^2\Pf(i,j)+\lambda\Pf(i+1,j)+\lambda\Pf(i,j+1)+\Pf(i+1,j+1),$$ and the addition formula gives
\begin{align}\label{wronski1}
\Pf(1^\ast,\cdots,2n^\ast)=\Pf(c,1,\cdots,2n+1), \quad \text{where $\Pf(c,i)=(-\lambda)^{i-1}$.}
\end{align}
Furthermore, if we have $\Pf(d,i^\ast)=\lambda\Pf(d,i)+\Pf(d,i+1)$, then
\begin{align}\label{wronski2}
\Pf(d,1^\ast,\cdots,2n-1^\ast)=\Pf(d,c,1,\cdots,2n),\quad \text{where $\Pf(d,c)=0$}.
\end{align}
Please refer  to \cite{hirota13,ohta04} for more details about the Wronski-type discrete Pfaffians.

\section{Condensation algorithms for Pfaffian and integrable lattices}\label{ca}
In this section, we consider two integrable lattices which could be regarded as condensation algorithms for Pfaffians. The first one comes from the famous Miwa equation or so-called discrete BKP (dBKP) equation, which has been extensively studied in soliton theory. Its discrete soliton solutions were exhibited in \cite{tsujimoto96} and recently the molecule solutions, i.e. the solutions with special initial values, were given in \cite{chang18}. Interestingly, this lattice equation was also proposed in combinatorics and named as the cube recurrence \cite{propp01}. Its Laurent property was shown by Fomin and Zelevinsky \cite{fomin02} and corresponding combinatorial objects called groves were given in \cite{carroll04}. We consider its B\"acklund transformation as a condensation algorithm for Pfaffians. The other integrable lattice algorithm we consider here is the Toda lattice of discrete DKP type, which is a coupled lattice equation. This lattice equation was proposed when we studied the Pfaffian sequence transformations \cite{chang182} and the integrability conditions such as B\"acklund transformation and Lax pair were shown as well.

 We can see that these two lattice equations are totally different but they are all efficient to compute the values of Pfaffians.

\subsection{A condensation algorithm for Pfaffian---an integrable lattice of $B$-type}
Let's start with the famous Miwa equation 
\begin{align}\label{btoda}
\tau_n^{k+1,l+1}\tau_n^{k,l}-\tau_n^{k,l+1}\tau_n^{k+1,l}=\tau_{n-1}^{k+1,l+1}\tau_{n+1}^{k,l}-\tau_{n+1}^{k+1,l}\tau_{n-1}^{k,l+1},
\end{align}
where we take the discrete step length as $1$.

As demonstrated in \cite{chang18}, with initial values $\tau_{-1}^{k,l}=0$, $\tau_0^{k,l}=1$, $\tau_1^{k,l}=\Pf(d_0,0)^{k,l}$ and $\tau_2^{k,l}=\Pf(0,1)^{k,l}$, where $\Pf(d_0,0)^{k,l}$ and $\Pf(0,1)^{k,l}$ are some functions given, this lattice equation admits the following Pfaffian tau-functions
\begin{align*}
\tau_{2n}^{k,l}=\Pf(0,\cdots,2n-1)^{k,l},\quad \tau^{k,l}_{2n+1}=\Pf(d_0,0,\cdots,2n)^{k,l},
\end{align*}
where the Pfaffian elements satisfy the discrete evolutions
\begin{align*}
\Pf(d_0,i)^{k+1,l}&=\Pf(d_1,i)^{k,l},\\
\Pf(i,j)^{k+1,l}&=\Pf(i,j)^{k,l}+\Pf(d_0,d_1,i,j)^{k,l},\\
\Pf(d_0,i)^{k,l+1}&=\Pf(d_0,i)^{k,l}+\Pf(d_0,i+1)^{k,l},\\
\Pf(i,j)^{k,l+1}&=\Pf(i,j)^{k,l}+\Pf(i+1,j)^{k,l}+\Pf(i,j+1)^{k,l}+\Pf(i+1,j+1)^{k,l}.
\end{align*}
It can be verified directly by using the bilinear identities \eqref{even} and \eqref{odd} with the help of discrete Pfaffian elements introduced in \eqref{gram1}-\eqref{wronski2}. Unfortunately, although we can give them explicit solutions, this scheme is not applicable to iterate, so it is not helpful in the construction of algorithm and we leave it here without proof; please refer to \cite{chang18} for some hints of verification. In other words, if we get the values of $\tau_{n-1}^{k,l}$ and $\tau_n^{k,l}$ for all $k,\,l\in\mathbb{Z}$, we need to solve a first-order difference equation to obtain the values $\tau_{n+1}^{k,l}$ (See Fig. \ref{p1}), which are not easy to ensure the uniqueness of the solutions even though all of the initial values are well set.

\begin{figure}[htbp]
\begin{center}
\begin{tikzpicture}
\newcommand*\kante{2}
\path[scale=\kante]
  (0,0,0)  coordinate   (A)
  (1,0,0)  coordinate   (B)
  (1,1,0)  coordinate   (C) 
  (0,1,0)  coordinate   (D)
  (0,0,-1) coordinate   (E)
  (1,0,-1) coordinate  (F)
  (1,1,-1) coordinate  (G)
  (0,1,-1) coordinate   (H)
   (0,2,0)  coordinate (A1)
  (1,2,0)  coordinate  (B1)
  (0,2,-1) coordinate   (E1)
  (1,2,-1) coordinate  (F1)

  (3.5,1,0)     coordinate (O)
  (3.5,1.5,0)  coordinate [label=above:$n$] (N)
  (3.5,1,-.75) coordinate [label=right:$l$](L)
  (3,1,0)  coordinate [label=below:$k$] (K)
  
;
\draw[font=\tiny]
(A) -- (A1) -- (B1) -- (B) -- cycle

(B) -- (F) -- (F1) -- (B1)

(D) -- (C) -- (G)

(A1) -- (B1) -- (F1)--(E1)--cycle
;

\draw[dashed, thick]
(A1)--(H)--(G)--(B1)--cycle

;

\path[dashed, very thin] (E) edge (A) edge (F) edge (H);
\path[dashed, very thin] (H) edge (D) edge (G) edge (E1);

\foreach \n in {D,H,C,G,E,F}
  \node at (\n)[circle,fill=gray,inner sep=1.5pt]{};
  
\foreach \l in{A1,B1}
  \node at (\l)[circle,fill,inner sep=1.5pt]{};

\draw [->,>=stealth] (O) -- (N) ;
\draw [->,>=stealth] (O) -- (L);
\draw [->,>=stealth] (O) -- (K);

\end{tikzpicture}\caption{The graphic explanation for $B$-Toda lattice: One can hardly compute the explicit value for the points at the top.}\label{p1}
\end{center}
\end{figure}
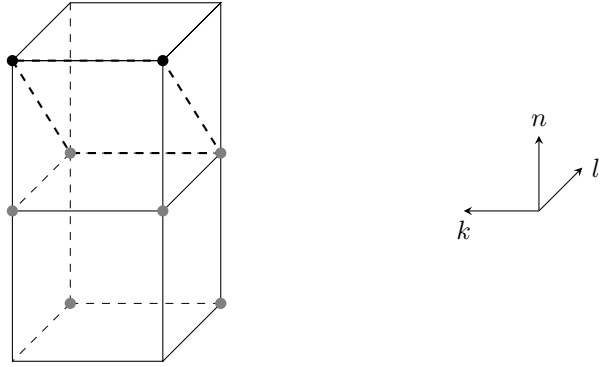

One needs to consider whether there is any explicitly iterative algorithm to compute the value of a Pfaffian. A possible way is to rotate the thickly dashed parallelogram (so as the corresponding hexahedron), and make one of the vertices at the top, two at the middle, and the rest one at the bottom. In this case, we can explicitly compute the value at the top if the values of the lower levels are known. In fact, this kind of realisation is equivalent to consider the  B\"acklund transformation of the lattice equation. As is known, the B\"acklund transformation of the discrete BKP equation was given by Gilson et al \cite{gilson03}, and later it was called as the generalised Lotka-Volterra equation \cite{chang18} 
\begin{align}\label{glv}
\tau_{n+2}^{k,l}\tau_{n-1}^{k+1,l+1}-\tau_n^{k,l+1}\tau_{n+1}^{k+1,l}=\tau_n^{k+1,l+1}\tau_{n+1}^{k,l}-\tau_n^{k+1,l}\tau_{n+1}^{k,l+1}.
\end{align}
The reasons we call it as the generalised Lotka-Volterra lattice are: (1) Just as the Lotka-Volterra lattice is the B\"acklund transformation of the Toda lattice, the generalised Lotka-Volterra lattice is the B\"acklund transformation of the B-Toda lattice; (2) If we dismiss the discrete variables $k$ or $l$, it looks similar to the full discrete Lotka-Volterra equation. The solution of this lattice is given in the following proposition.
\begin{proposition}\label{pp1}
The solutions of the generalised Lotka-Volterra equation \eqref{glv} are given by
\begin{align*}
\tau_{2n}^{k,l}=\Pf(0,\cdots,2n-1)^{k,l},\quad \tau_{2n+1}^{k,l}=\Pf(d,0,\cdots,2n)^{k,l}
\end{align*}
with the discrete Pfaffian evolutions satisfying
\begin{align}\label{de}
\begin{aligned}
\Pf(d,i)^{k+1,l}&=\Pf(d,i+1)^{k,l},\\
\Pf(i,j)^{k+1,l}&=\Pf(i+1,j+1)^{k,l},\\
\Pf(d,i)^{k,l+1}&=\Pf(d,i)^{k,l}+\Pf(d,i+1)^{k,l}.\\
\Pf(i,j)^{k,l+1}&=\Pf(i,j)^{k,l}+\Pf(i+1,j)^{k,l}+\Pf(i,j+1)^{k,l}+\Pf(i+1,j+1)^{k,l}.
\end{aligned}
\end{align}
Moreover, the initial values of the equation are given by $\tau_{-1}^{k,l}=0$, $\tau_0^{k,l}=1$, $\tau_1^{k,l}=\Pf(d,i)^{k,l}$ and $\tau_2^{k,l}=\Pf(0,1)^{k,l}$ for all $k,\,l\in\mathbb{Z}_{\geq0}$. 
\end{proposition}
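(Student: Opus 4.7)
The plan is to verify the ansatz by direct substitution into \eqref{glv} and reduction to the Pfaffian bilinear identities \eqref{even}--\eqref{odd}. The preparatory step is to rewrite each of the four shifted $\tau$-values in \eqref{glv} as a Pfaffian over the base point $(k,l)$ with an enlarged index set, using only the discrete evolutions \eqref{de}.

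For the $k$-shift, the first two lines of \eqref{de} act as a pure relabelling of every integer index by $+1$ (with $d$ unchanged), so $\tau_{2n}^{k+1,l}=\Pf(1,2,\ldots,2n)^{k,l}$ and $\tau_{2n+1}^{k+1,l}=\Pf(d,1,2,\ldots,2n+1)^{k,l}$. For the $l$-shift, the last two lines of \eqref{de} are exactly the Wronski hypotheses at $\lambda=1$, so \eqref{wronski1}--\eqref{wronski2} apply and introduce an auxiliary index $c$ satisfying $\Pf(c,d)=0$ and $\Pf(c,i)=(-1)^{i-1}$, giving $\tau_{2n}^{k,l+1}=\Pf(c,0,1,\ldots,2n)^{k,l}$ and $\tau_{2n+1}^{k,l+1}=\Pf(d,c,0,1,\ldots,2n+1)^{k,l}$. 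Composing the two shifts yields the $(k+1,l+1)$-values. After this rewriting, every term of \eqref{glv} is a Pfaffian at the single base point $(k,l)$ over an explicit subset of $\{c,d,0,1,2,\ldots\}$.

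The verification then splits by the parity of $n$. For $n=2m$ the resulting four-term identity matches a single expansion of an appropriate master Pfaffian involving both auxiliary symbols $c,d$ together with the index range $0,\ldots,2m+1$, obtained by peeling pairs off via \eqref{even}: the vanishing $\Pf(c,d)=0$ kills the spurious cross-term while the sign factors $\Pf(c,i)=(-1)^{i-1}$ collapse the remaining pieces to exactly the four monomials in \eqref{glv}. The case $n=2m+1$ is treated analogously using \eqref{odd}, since the ansatz switches between even-core and odd-core Pfaffians there. Finally, the initial data $\tau_{-1}=0$, $\tau_0=1$, $\tau_1=\Pf(d,0)$ and $\tau_2=\Pf(0,1)$ are immediate from the Pfaffian definition and the convention that the empty Pfaffian equals $1$.

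The principal obstacle is the combinatorial bookkeeping. The signs carried by the auxiliary index $c$ must conspire with those produced by the expansions \eqref{even}--\eqref{odd} so that the two sides of \eqref{glv} coincide term by term, with no extraneous monomials surviving; in addition one must verify that the $k$- and $l$-shifts on the Pfaffian elements commute, so that the $(k+1,l+1)$-rewriting used above is unambiguous.
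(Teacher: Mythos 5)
Your proposal follows essentially the same route as the paper's proof: rewrite every shifted $\tau$ as a Pfaffian at the base point $(k,l)$ via the Wronski-type addition formulas \eqref{wronski1}--\eqref{wronski2} (introducing the auxiliary index $c$ with $\Pf(d,c)=0$), then verify \eqref{glv} by splitting on the parity of $n$, applying \eqref{even} with $\{a_1,\dots,a_4\}=\{d,c,0,2n+1\}$, $\{\star\}=\{1,\dots,2n\}$ in the even case and \eqref{odd} with $\{a_1,a_2,a_3\}=\{d,c,0\}$, $\{\ast\}=\{1,\dots,2n-1\}$ in the odd case. The only caveat is the bookkeeping you yourself flag: since the index set here starts at $0$, the normalisation consistent with \eqref{wronski1} is $\Pf(c,i)=(-1)^{i}$ rather than $(-1)^{i-1}$ (the two choices differ by a global sign of the $c$-row and are interchangeable if tracked consistently), and it is precisely this that produces the minus signs in $\tau^{k+1,l+1}_{2n}=-\Pf(c,1,\dots,2n+1)^{k,l}$ and $\tau^{k+1,l+1}_{2n+1}=-\Pf(d,c,1,\dots,2n+2)^{k,l}$ needed for the four monomials of \eqref{glv} to emerge with the correct signs.
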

\begin{proof}
The proof of this proposition is based on the discrete Wronskian-type Pfaffian and the bilinear identities. Although some hints have been demonstrated in \cite{chang18}, for self-consistency, we give more details about the proof and the ideas would be used later.
According to the Wronski-type Pfaffian formulae \eqref{wronski1} and \eqref{wronski2}, we know another label $c$ should be introduced such that $\Pf(d,c)=0$ and $\Pf(c,i)=(-1)^i$. Then from the discrete evolutions \eqref{de}, one has
\begin{align}\label{tau1}
\begin{aligned}
&\tau_{2n}^{k,l+1}=\Pf(c,0,\cdots,2n)^{k,l},\quad\qquad\quad\,\, \tau_{2n+1}^{k,l+1}=\Pf(d,c,0,\cdots,2n+1)^{k,l},\\
&\tau_{2n}^{k+1,l}=\Pf(1,\cdots,2n)^{k,l},\quad\qquad\qquad\,\, \tau_{2n+1}^{k+1,l}=\Pf(d,1,\cdots,2n+1)^{k,l},\\
&\tau_{2n}^{k+1,l+1}=-\Pf(c,1,\cdots,2n+1)^{k,l},\quad \tau_{2n+1}^{k+1,l+1}=-\Pf(d,c,1,\cdots,2n+2)^{k,l}.
\end{aligned}
\end{align}
Using the bilinear identity \eqref{even} by taking $\{a_1,a_2,a_3,a_4\}$ as $\{d,c,0,2n+1\}$ and $\{\star\}$ as $\{1,\cdots,2n\}$,
we can be lead to the bilinear identity
\begin{align*}
\tau_{2n+2}^{k,l}\tau_{2n-1}^{k+1,l+1}=-\tau_{2n}^{k+1,l}\tau_{2n+1}^{k,l+1}+\tau_{2n}^{k+1,l+1}\tau_{2n+1}^{k,l}+\tau_{2n}^{k,l+1}\tau_{2n+1}^{k+1,l}.
\end{align*}
Similarly, if we make the use of bilinear identity \eqref{odd} by taking $\{a_1,a_2,a_3\}$ as $\{d,c,0\}$ and $\{\ast\}$ as $\{1,\cdots,2n-1\}$, it reads
\begin{align*}
\tau_{2n+1}^{k,l}\tau_{2n-2}^{k+1,l+1}=-\tau_{2n-1}^{k+1,l}\tau_{2n}^{k,l+1}+\tau_{2n-1}^{k+1,l+1}\tau_{2n}^{k,l}+\tau_{2n-1}^{k,l+1}\tau_{2n}^{k+1,l}.
\end{align*}
Combining these gives the equation \eqref{glv}.
\end{proof}
\begin{remark}
It is remarkable that the $14$-point scheme demonstrated by King and Schief \cite{king14} can be realised by using one B-Toda lattice \eqref{btoda} (8-points scheme) and B\"acklund transformation \eqref{glv} in three different directions (in every direction $2$ points are added).
\end{remark}
\subsubsection{Why is it a condensation algorithm for the Pfaffian?} The process is divided into two parts.

(1) Preparation of Data. Introducing quadruplets $\Pf(i,j,k,l)$ to denote the Pfaffian elements $\Pf(i,j)^{k,l}$. Let's consider a $2N\times 2N$ skew symmetric matrix $\left(a_{i,j}\right)_{i,j=0}^{2N-1}$ and store the elements $a_{i,j}$ as $\Pf(i,j,0,0)$.
Following the discrete evolution relation \eqref{de}, one can iteratively obtain 
 $\Pf(i,j,k,0):=\Pf(i+k,j+k,0,0)$ for $1\leq k\leq 2N-2$ and
\begin{align*}
&\Pf(i,j,k,l):=\\
&\quad\quad\Pf(i,j,k,l-1)+\Pf(i+1,j,k,l-1)+\Pf(i,j+1,k,l-1)+\Pf(i+1,j+1,k,l-1)
\end{align*}
 for $1\leq k+l\leq 2N-2$. Therefore, information about $\Pf(0,1,k,l)$ are stored for $0\leq k+l\leq 2N-2$. Introduce triplets $\md(i,k,l)$ to take the place of $\Pf(d_0,i)^{k,l}$, which are totally independent with $\Pf(i,j,k,l)$ (c.f. \eqref{de}). We can assume that there be $2N$ free parameters $\{\alpha_i\}_{i=0}^{2N-1}$, such that $\md(i,0,0)=\alpha_i$. Following \eqref{de} again, we know that $\md(i,k,0):=\md(i+k,0,0)$ and $\md(i,k,l):=\md(i,k,l-1)+\md(i+1,k,l-1)$. Therefore, we get enough information about $\md(0,k,l)$ for $0\leq k+l\leq 2N-2$ for iteration.

(2) Iteration. Being well prepared, we now take $\phi_n^{k,l}:=\tau_{2n}^{k,l}$ and $\psi_n^{k,l}:=\tau_{2n+1}^{k,l}$ and initial values $\phi_0^{k,l}=1$, $\phi_1^{k,l}=\Pf(0,1,k,l)$ and $\psi_0^{k,l}=\md(0,k,l)$. By realising that the equation \eqref{glv} can be split into two equations
\begin{subequations}
\begin{align}
\psi_n^{k,l}\phi_{n-1}^{k+1,l+1}&=-\psi_{n-1}^{k+1,l}\phi_n^{k,l+1}+\psi_{n-1}^{k+1,l+1}\phi_n^{k,l}+\psi_{n-1}^{k,l+1}\phi_n^{k+1,l},\label{glv1}\\
\phi_{n+1}^{k,l}\psi_{n-1}^{k+1,l+1}&=-\phi_n^{k+1,l}\psi_n^{k,l+1}+\phi_n^{k+1,l+1}\psi_n^{k,l}+\phi_n^{k,l+1}\psi_n^{k+1,l},\label{glv2}
\end{align}
\end{subequations}
one can use the equation \eqref{glv1} to obtain $\psi_1^{k,l}$ and use \eqref{glv2} to obtain $\phi_2^{k,l}$, etc.. 
The value of $\phi_N^{0,0}$ is the goal for iteration, to obtain the exact value of a Pfaffian of order $N$.

The computational cost is mainly from the preparation process and the algorithm needs about $O(N^4)$ floating-point operations. We demonstrate a one-step process in Fig \ref{p2}.

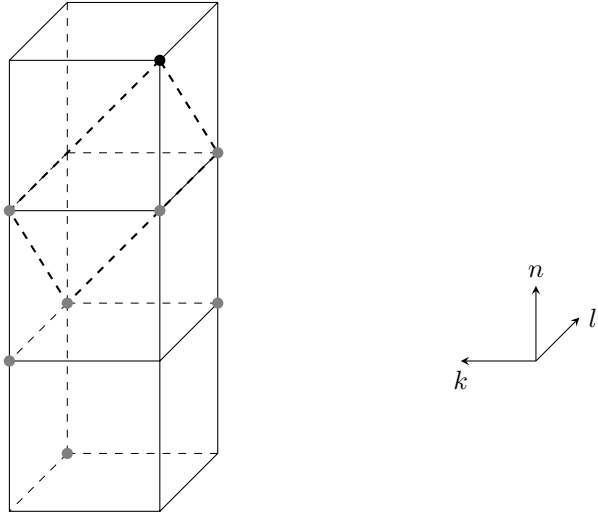
\begin{figure}[htbp]
\centering
\begin{tikzpicture}
\newcommand*\kante{2}
\path[scale=\kante]
  (0,0,0)  coordinate   (A)
  (1,0,0)  coordinate   (B)
  (1,1,0)  coordinate   (C) 
  (0,1,0)  coordinate   (D)
  (0,0,-1) coordinate   (E)
  (1,0,-1) coordinate  (F)
  (1,1,-1) coordinate  (G)
  (0,1,-1) coordinate   (H)
   (0,2,0)  coordinate (A1)
  (1,2,0)  coordinate  (B1)
  (1,3,0)  coordinate  (C1)
  (0,3,0)  coordinate   (D1)
  (0,2,-1) coordinate   (E1)
  (1,2,-1) coordinate  (F1)
  (1,3,-1) coordinate  (G1)
  (0,3,-1) coordinate  (H1)
  
  (3.5,1,0)     coordinate (O)
  (3.5,1.5,0)  coordinate [label=above:$n$] (N)
  (3.5,1,-.75) coordinate [label=right:$l$](L)
  (3,1,0)  coordinate [label=below:$k$] (K)
  
;
\draw[font=\tiny]
(A) -- (D1) -- (C1) -- (B) -- cycle

(B) -- (F) -- (G1) -- (H1) -- (D1)

(C1) -- (G1)

(D) -- (C) -- (G)

(A1) -- (B1) -- (F1)
;

\draw[dashed, thick]
(C1)--(A1)--(H)--(F1)--cycle;

\path[dashed, very thin] (E) edge (A) edge (F) edge (H);
\path[dashed, very thin] (H) edge (D) edge (G) edge (E1);
\path[dashed, very thin] (E1) edge (A1) edge (F1) edge (H1);

\foreach \n in {D,E,G,H,A1,B1,F1}
  \node at (\n)[circle,fill=gray,inner sep=1.5pt]{};

  \node at (C1)[circle,fill=black,inner sep=1.5pt]{};

\draw [->,>=stealth] (O) -- (N) ;
\draw [->,>=stealth] (O) -- (L);
\draw [->,>=stealth] (O) -- (K);

\end{tikzpicture}
\caption{The graphic explanation for Generalised Lotka-Volterra lattice: The exact value of the point at the top will be explicitly obtained if enough initial values are given.}\label{p2}
\end{figure}
{\bf{An illustrating example:}}
Consider the easiest example $\Pf(a_{i,j})_{i,j=0}^3$ with Pfaffian elements $a_{i,j}=-a_{j,i}$ and introduce the free parameters $\alpha_i=1$. Some simple computations tell us that $\psi_0^{k,l}=2^{l}$, and $\phi_1^{0,0}=a_{0,1}$, $\phi_1^{1,0}=a_{1,2}$, $\phi_1^{0,1}=a_{1,2}+a_{0,2}+a_{0,1}$, $\phi_1^{1,1}=a_{2,3}+a_{1,3}+a_{1,2}$, $\phi_1^{2,0}=a_{2,3}$ and $\phi_1^{0,2}=a_{2,3}+2a_{1,3}+3a_{1,2}+2a_{0,2}+a_{0,3}+a_{0,1}$, which are the preparations for the iterations. Now we can compute $\psi_1^{0,0}=-\phi_1^{0,1}+2\phi_1^{0,0}+2\phi_1^{1,0}=a_{0,1}-a_{0,2}+a_{1,2}$, $\psi_1^{0,1}=-2\phi_1^{0,2}+4\phi_1^{0,1}+4\phi_1^{1,1}=2(a_{2,3}+a_{1,2}-a_{0,3}+a_{0,1})$ and $\psi_1^{1,0}=-\phi_1^{1,1}+2\phi_1^{1,0}+2\phi_1^{2,0}=a_{1,2}-a_{1,3}+a_{2,3}$. Putting them into \eqref{glv2}, we can get
$\phi_2^{0,0}=a_{0,1}a_{2,3}-a_{0,2}a_{1,3}+a_{0,3}a_{1,2}$, which coincides with the expansion of the second order Pfaffian.

In fact, regarding the lattice equation \eqref{glv} and discrete evolutions \eqref{de}, we simply took the $\lambda=1$ in the discrete Wronski formulae (c.f. eqs. \eqref{wronski1}-\eqref{wronski2}). However, we can keep the parameter $\lambda$ inside the Pfaffian elements, and then the system can be generalised to the one involving a relaxation parameter. Consider $\lambda\in\mathbb{C}\backslash\{0\}$ and the discrete evolutions
\begin{align*}
&\Pf(d,i)^{k+1,l}=\Pf(d,i+1)^{k,l},\\
&\Pf(i,j)^{k+1,l}=\Pf(i+1,j+1)^{k,l},\\
&\Pf(d,i)^{k,l+1}=\lambda\Pf(d,i)^{k,l}+\Pf(d,i+1)^{k,l},\\
&\Pf(i,j)^{k,l+1}=\lambda^2\Pf(i,j)^{k,l}+\lambda\Pf(i+1,j)^{k,l}+\lambda\Pf(i,j+1)^{k,l}+\Pf(i+1,j+1),
\end{align*}
then the discrete evolutions of the tau functions should be the same with \eqref{tau1} except
\begin{align*}
\tau_{2n}^{k+1,l+1}=-\lambda^{-1}\Pf(c,1,\cdots,2n+1)^{k,l},\quad
\tau_{2n+1}^{k+1,l+1}=-\lambda^{-1}\Pf(d,c,1,\cdots,2n+2)^{k,l}.
\end{align*}
Following the proof in Proposition \ref{pp1} and using bilinear identities backwards, one can find the following iteration scheme
\begin{align}\label{relax1}
\begin{aligned}
&\tau_{2n+2}^{k,l}\tau_{2n-1}^{k+1,l+1}=\tau_{2n}^{k+1,l+1}\tau_{2n+1}^{k,l}+\lambda \tau_{2n}^{k,l+1}\tau_{2n+1}^{k+1,l}-\lambda \tau_{2n}^{k+1,l}\tau_{2n+1}^{k,l+1},\\
&\tau_{2n+1}^{k,l}\tau_{2n-2}^{k+1,l+1}=\tau_{2n-1}^{k+1,l+1}\tau_{2n}^{k,l}+\lambda\tau_{2n-1}^{k,l+1}\tau_{2n}^{k+1,l}-\lambda\tau_{2n-1}^{k+1,l}\tau_{2n}^{k,l+1}.\end{aligned}
\end{align}
We call the parameter $\lambda$ as the relaxation factor and when $\lambda=1$, it reduces to the original algorithm.

\subsection{Another condensation algorithm for Pfaffian---an integrable lattice of $D$-type}
This part is devoted to another condensation algorithm for the Pfaffian. This integrable lattice was firstly proposed by considering a reasonable acceleration for the Pfaffian sequence transformation with its integrability \cite{chang182}. Unlike the generalised Lotka-Volterra lattice, which could be written as a unified bilinear recurrence relation, the integrable lattice related to the DKP equation can only be written in a coupled formalism. The main results are stated as following proposition.
\begin{proposition}\label{pp2}
The Toda lattice of DKP type admits the form
\begin{subequations}
\begin{align}
&\sigma_n^{k,l+1}\tau_{n-1}^{k+1,l+1}=\tau_{n-1}^{k+1,l+1}\tau_n^{k,l+1}+\tau_{n-1}^{k+1,l+2}\tau_n^{k,l}-\tau_{n-1}^{k,l+2}\tau_n^{k+1,l}+\sigma_{n-1}^{k+1,l+1}\tau_n^{k,l+1},\label{deq1}\\
&\tau_{n+1}^{k,l}\tau_{n-1}^{k+1,l+2}=\sigma_n^{k,l+1}\tau_n^{k+1,l+1}-\sigma_n^{k+1,l+1}\tau_n^{k,l+1}-\tau_n^{k+1,l+1}\tau_n^{k,l+1}+\tau_n^{k,l+2}\tau_n^{k+1,l}.\label{deq2}
\end{align}
\end{subequations}
By considering the proper initial values $
\tau_0^{k,l}=1,\,\tau_1^{k,l}=\Pf(0,1)^{k,l},\, \sigma_0^{k,l}=0
$ where $\Pf(0,1)^{k,l}$ are some given values,
the solutions of this integrable lattice are given in terms of Pfaffian, namely
\begin{align*}
\tau_n^{k,l}=\Pf(0,\cdots,2n-1)^{k,l},\quad \sigma_n^{k,l}=\Pf(d_0,d_1,0,\cdots,2n-1)^{k,l}
\end{align*}
with the Pfaffian elements satisfying
\begin{subequations}
\begin{align}
&\Pf(d_0,d_1)^{k,l}=0,\quad \Pf(d_0,i)^{k,l}=1,\label{sde0}\\
&\Pf(d_1,i)^{k,l+1}=\Pf(0,i+1)^{k,l}-\Pf(0,i)^{k,l},\label{sde4}\\
&\Pf(i,j)^{k,l+1}=\Pf(i,j)^{k,l}-\Pf(i+1,j)^{k,l}-\Pf(i,j+1)^{k,l}+\Pf(i+1,j+1)^{k,l},\label{sde1}\\
&\Pf(d_1,i)^{k+1,l}=\Pf(d_1,i+1)^{k,l}+\Pf(0,i+1)^{k,l},\label{sde2}\\
&\Pf(i,j)^{k+1,l}=\Pf(i+1,j+1)^{k,l}.\label{sde3}
\end{align}
\end{subequations}
\end{proposition}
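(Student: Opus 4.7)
The plan is to mirror the strategy of Proposition \ref{pp1}: first express every shifted Pfaffian appearing in \eqref{deq1}--\eqref{deq2} as a Pfaffian based at the common point $(k,l)$ but with an enlarged label set, and then recognise each coupled equation as an instance of the bilinear Pfaffian identities \eqref{even}--\eqref{odd}.

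For the translation step, observe that \eqref{sde1} is exactly the Wronski-type $l$-evolution \eqref{wronski1} specialised to $\lambda=-1$. The auxiliary label $c$ that \eqref{wronski1} produces then satisfies $\Pf(c,i)=(-\lambda)^{i-1}=1$, which by \eqref{sde0} coincides with $\Pf(d_0,i)$; hence $c$ may be identified with $d_0$ and we obtain
\begin{align*}
\tau_n^{k,l+1}=\Pf(d_0,0,1,\ldots,2n)^{k,l}.
\end{align*}
The $k$-shift rule \eqref{sde3} is a pure index translation, so $\tau_n^{k+1,l}=\Pf(1,2,\ldots,2n)^{k,l}$, and iterating the two shifts gives compact base-point expressions for $\tau_n^{k+1,l+1}$, $\tau_n^{k,l+2}$ and $\tau_n^{k+1,l+2}$ involving at most one further Wronski-type label. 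An analogous analysis, combining \eqref{sde2} and \eqref{sde4} with the Gram-type formulae \eqref{gram1}--\eqref{gram2} so as to absorb the $d_1$-row evolution, should express $\sigma_n^{k,l+1}$, $\sigma_n^{k+1,l+1}$ and $\sigma_{n-1}^{k+1,l+1}$ as Pfaffians of the form $\Pf(d_0,d_1,\ldots)^{k,l}$.

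Armed with this dictionary, the second step matches each coupled equation to a bilinear identity. Equation \eqref{deq1}, which pairs an odd-index $\sigma$ with an even-index $\tau$ on every term, fits the shape of \eqref{odd}: I would apply \eqref{odd} with $\{a_1,a_2,a_3\}$ drawn from $\{d_0,d_1,0\}$ and $\{\ast\}=\{1,\ldots,2n-1\}$, and read off \eqref{deq1} after substitution through the dictionary. Equation \eqref{deq2}, which contains four $\tau\tau$ products in addition to the $\sigma\tau$ products, should instead follow from \eqref{even} applied with $\{a_1,a_2,a_3,a_4\}$ drawn from $\{d_0,d_1,0,2n+1\}$ and $\{\star\}=\{1,\ldots,2n\}$, collapsing to \eqref{deq2} after signs are tracked and \eqref{sde0} is used to eliminate the constant Pfaffian entries $\Pf(d_0,\cdot)$.

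The main obstacle lies in the translation rather than in the identity-matching. Unlike Proposition \ref{pp1}, the $l$-shift rule \eqref{sde4} for $\Pf(d_1,i)$ is \emph{not} of the pure Wronski form $\lambda\Pf(d_1,i)+\Pf(d_1,i+1)$ required by \eqref{wronski2}; instead it replaces $d_1$-entries by combinations of $\Pf(0,\cdot)$ entries, so that the $d_1$-row is genuinely coupled to the $0$-row under the $l$-shift. Consequently, converting $\sigma_n^{k,l+1}$ into a base-point Pfaffian requires invoking \eqref{wronski1} for the $(i,j)$ block, \eqref{gram2} for the $d_1$-row, and the normalisation $\Pf(d_0,i)=1$ simultaneously, and ensuring that the signs, orientations and index ranges match after this combined manipulation is the step where careful bookkeeping is indispensable. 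Once that dictionary is established, the final verification that the chosen applications of \eqref{even} and \eqref{odd} reproduce \eqref{deq1}--\eqref{deq2} exactly is a routine--if lengthy--calculation.
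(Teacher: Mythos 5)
Your overall architecture coincides with the paper's: translate the shifted $\tau$'s and $\sigma$'s into Pfaffians based at $(k,l)$ and then invoke \eqref{even} with $\{a_1,\dots,a_4\}=\{d_0,d_1,0,2n+1\}$, $\{\star\}=\{1,\dots,2n\}$, and \eqref{odd} with $\{a_1,a_2,a_3\}=\{d_0,d_1,0\}$, $\{\ast\}=\{1,\dots,2n-1\}$; your handling of the $\tau$-part of the dictionary is also correct (the $l$-shift is Wronski-type with $\lambda=-1$, and the auxiliary label $c$ may indeed be taken to be $d_0$ since $\Pf(d_0,i)=1$, giving $\tau_n^{k,l+1}=\Pf(d_0,0,\dots,2n)^{k,l}$ and $\tau_n^{k+1,l+1}=\Pf(d_0,1,\dots,2n+1)^{k,l}$). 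There are, however, two concrete defects. The lesser one is that your assignment of identities to equations is reversed: with the stated index sets, \eqref{even} yields \eqref{deq1} (after replacing $n$ by $n-1$ and applying a global shift $l\mapsto l+1$), whereas \eqref{odd} yields \eqref{deq2}. The parity heuristic you import from Proposition \ref{pp1} fails here because both $\tau_n$ and $\sigma_n$ are Pfaffians of even order, so neither equation is intrinsically ``odd''.

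The more serious defect is that the dictionary entries that actually carry the proof --- those involving $d_1$ --- are precisely the ones you defer to ``bookkeeping'', and the form you predict for them is wrong. The shifted $\sigma$'s are \emph{not} single base-point Pfaffians. What must be proved instead (and what the paper extracts from the row-reduction property \eqref{trans}, i.e.\ from $\Pf(0,\cdots,2n-1)=\Pf(d_1,0^\ast,\cdots,(2n-2)^\ast)$ with $\Pf(d_1,i^\ast)=\Pf(0,i+1)-\Pf(0,i)$) is that $d_1$ acts as an order-raising, $l$-lowering label: $\Pf(d_1,0,\dots,2n)^{k,l}=\tau_{n+1}^{k,l-1}$, $\Pf(d_1,1,\dots,2n+1)^{k,l}=\tau_{n+1}^{k+1,l-1}-\tau_{n+1}^{k,l}$, and $\Pf(d_0,d_1,1,\dots,2n)^{k,l}=\sigma_n^{k+1,l}-\tau_n^{k,l+1}+\tau_n^{k+1,l}$. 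These identifications are the only way $\tau_{n+1}$ enters the bilinear identities at all; several of them are three-term combinations whose cross terms must cancel to leave the four-term right-hand sides of \eqref{deq1}--\eqref{deq2}; and the displayed equations emerge only after the final relabelling $l\mapsto l+1$, so one never needs (and should not look for) a base-point expression for $\sigma_n^{k,l+1}$ itself. Until these entries are established, the identity-matching step cannot be executed, so this is the gap to close.
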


\begin{proof}
The key point is to make the use of \eqref{trans}. By abstracting the second line by the first one and continuing the process, one can find
\begin{align*}
\Pf(0,\cdots,2n-1)=\Pf(d_1,{0}^\ast,\cdots,{2n-2}^\ast),
\end{align*}
where $\Pf(d_1,{i}^\ast)=\Pf(0,i+1)-\Pf(0,i)$ and $\Pf({i}^\ast,{j}^\ast)=\Delta_1\Delta_2\Pf(i,j)$, with notations $\Delta_1\Pf(i,j)=\Pf(i+1,j)-\Pf(i,j)$, $\Delta_2\Pf(i,j)=\Pf(i,j+1)-\Pf(i,j)$.
Such observations lead to the discrete evolutions \eqref{sde4}-\eqref{sde1}. Therefore, by discrete evolutions \eqref{sde0}-\eqref{sde3}, one can find
\begin{align*}
&\Pf(d_0,0,\cdots,2n)^{k,l}=\tau_n^{k,l+1},\quad \Pf(1,\cdots,2n)^{k,l}=\tau_n^{k+1,l},\quad \Pf(d_1,0,\cdots,2n)^{k,l}=\tau_{n+1}^{k,l-1},\\
&\Pf(d_1,1,\cdots,2n+1)^{k,l}=\tau_{n+1}^{k+1,l-1}-\tau_{n+1}^{k,l},\quad
\Pf(d_0,1,\cdots,2n+1)^{k,l}=\tau_{n}^{k+1,l+1},\\
&\Pf(d_0,d_1,1,\cdots,2n)^{k,l}=\sigma_n^{k+1,l}-\tau_n^{k,l+1}+\tau_{n}^{k+1,l}.
\end{align*}
Making use of the Pfaffian identity \eqref{even} with $\{a_1,a_2,a_3,a_4\}=\{d_0,d_1,0,2n+1\}$ and $\{\star\}=\{1,\cdots,2n\}$ and identity \eqref{odd} with $\{a_1,a_2,a_3\}=\{d_0,d_1,0\}$ and $\{\ast\}=\{1,\cdots,2n-1\}$, one can verify that the Pfaffian tau functions with the given discrete evolutions satisfy the integrable lattice.
\end{proof}

It's not surprising that \eqref{deq1}-\eqref{deq2} can be used for computing the value of Pfaffian since it has been utilised as an iterative algorithm in convergent acceleration algorithm.
As before, we would show how to realise it as a condensation algorithm.
Let's consider a $2N\times 2N$ skew symmetric matrix $\left(a_{i,j}\right)_{i,j=0}^{2N-1}$, and store them in the quadruplet $\pf(i,j,0,0)=a_{i,j}$. Then we can set $\pf(i,j,k,0):=\pf(i+k,j+k,0,0)$ for $1\leq k\leq 2N-1$ and 
\begin{align*}
&\pf(i,j,k,l):=\\&
\quad\quad
\pf(i,j,k,l-1)-\pf(i,j+1,k,l-1)-\pf(i+1,j,k,l-1)+\pf(i+1,j+1,k,l-1)
\end{align*} for $1\leq k+l\leq 2N-1$. 
Therefore, by setting the initial values
\begin{align*}
\tau_0^{k,l}=1,\,\tau_1^{k,l}=\pf(0,1,k,l),\, \sigma_0^{k,l}=0,
\end{align*}
the equation \eqref{deq1} would give us the exact value of $\sigma_1^{k,l}$ and \eqref{deq2} would give us the value of $\tau_2^{k,l}$, etc.. Iterating the values of $\sigma$ and $\tau$ repeatedly, we would finally obtain the results of $\tau_n^{0,0}$, as expected. The computational cost is mainly from the storage the values for $\pf(0,1,k,l)$, which cost $O(N^4)$ floating-point operations.

{\bf{An illustrating example:}}
Let's consider a second order Pfaffian $\Pf(a_{i,j})_{i,j=0}^3$ with $a_{i,j}=-a_{j,i}$, then we know $\tau_0^{0,0}=a_{0,1}$, $\tau_0^{0,1}=a_{1,2}-a_{0,2}+a_{0,1}$, $\tau_0^{1,0}=a_{1,2}$, $\tau_0^{1,1}=a_{2,3}-a_{1,3}+a_{1,2}$, $\tau_0^{2,0}=a_{2,3}$ and $\tau_0^{0,2}=a_{2,3}-2a_{1,3}+3a_{1,2}+a_{0,3}-2a_{0,2}+a_{0,1}$.
Therefore, from equation \eqref{deq1} we can compute $\sigma_1^{0,1}=2a_{0,1}-a_{0,2}$
and $\sigma_{1}^{1,1}=2a_{1,2}-a_{1,3}$, and then from equation \eqref{deq2}, we know that
$\tau_2^{0,0}=a_{0,1}a_{2,3}-a_{0,2}a_{1,3}+a_{0,3}a_{1,2}$.

Similarly, we can introduce a free parameter into the relations \eqref{sde4}-\eqref{sde1} and consider the following discrete evolutions
\begin{align*}
&\Pf(d_1,i)^{k,l+1}=\Pf(0,i+1)^{k,l}+\lambda\Pf(0,i)^{k,l},\\
&\Pf(i,j)^{k,l+1}=\lambda^2\Pf(i,j)^{k,l}+\lambda\Pf(i+1,j)^{k,l}+\lambda\Pf(i,j+1)^{k,l}+\Pf(i+1,j+1)^{k,l}.
\end{align*}
According to the proof of Proposition \ref{pp2}, this kind of evolution comes from the basic column/row transformations by adding the second line by the first line times $\lambda$. By simply expanding the Pfaffian and making the use of discrete evolutions, one can find there are only two terms changed
\begin{align*}
\Pf(d_0,1,\cdots,2n+1)^{k,l}=-\lambda\tau_n^{k+1,l+1},\quad \Pf(d_0,d_1,1,\cdots,2n)^{k,l}=-\lambda\sigma_n^{k+1,l}-\tau_n^{k,l+1}+\tau_n^{k+1,l}.
\end{align*}
Following Proposition \ref{pp2}, one can find a relaxation factor appeared in the algorithm 
\begin{align}\label{relax2}
&\sigma_{n}^{k,l+1}\tau_{n-1}^{k+1,l+1}=\tau_{n-1}^{k+1,l+1}\tau_n^{k,l+1}-\lambda\tau_{n-1}^{k+1,l+2}\tau_n^{k,l}-\tau_{n-1}^{k,l+2}\tau_n^{k+1,l}-\lambda\sigma_{n-1}^{k+1,l+1}\tau_n^{k,l+1},\\
&\tau_{n+1}^{k,l}\tau_{n-1}^{k+1,l+2}=-\lambda^{-1}\sigma_n^{k,l+1}\tau_{n}^{k+1,l+1}-\sigma_n^{k+1,l+1}\tau_n^{k,l+1}+\lambda^{-1}\tau_n^{k,l+1}\tau_n^{k+1,l+1}-\lambda^{-1}\tau_n^{k+1,l}\tau_n^{k,l+2},\nonumber
\end{align}
which reduces to \eqref{deq1}-\eqref{deq2} when $\lambda=-1$.
\section{Discretisation of Gram-type Pfaffian and discrete integrable system}\label{dc}
In this part, we demonstrate another discretisation of integrable systems by using the discrete Pfaffian elements, to show the effectiveness of the method.
In fact, the solution we consider in this part is a kind of determinant, but from the connection between special Pfaffian and 
determinant, we know that it can be dealt with the formulae by Pfaffian. The link between determinant and Pfaffian was elaborately discussed in, for example, \cite[Section 2]{ohta04}.

 The lattice equation considered here is the C-Toda lattice \cite{chang183,li19}, which has a $2+1$-dimensional generalisation
\begin{align}\label{2+1}
D_t\tau_{n+1}\cdot\tau_n=\sigma_n^2,\quad D_tD_s\tau_{n+1}\cdot\tau_{n+1}=4\sigma_{n+1}\sigma_n.
\end{align}
It is remarkable that the equation could be iterated if the initial values are given; one can get the exact values of $\tau_{n+1}$ from the first equation and that of $\sigma_{n+1}$ from the second equation although a first-order ODE should be solved. Moreover, 
the solutions of this lattice equation have closed determinant forms \cite{chang183} 
\begin{align*}
\tau_n=\det(I_{i,j})_{i,j=0,\cdots,n-1},\quad \sigma_n=\det\left(I_{i,j},\,\alpha_j\right)_{\substack{i=0,\cdots,n\\j=0,\cdots,n-1}}
\end{align*}
with the time evolutions
\begin{align}\label{2+1t}
\frac{\p}{\p t}I_{i,j}=I_{i+1,j}+I_{i,j+1},\quad \frac{\p}{\p s}I_{i,j}=\alpha_i\alpha_j,\quad \frac{\p}{\p t}\alpha_i=\alpha_{i+1}.
\end{align}
The Pfaffian version of the solutions was given in \cite{li19} by recognising that
\begin{align*}
\tau_n=\Pf(0,\cdots,n-1,n-1^*,\cdots,0^*),\quad \sigma_n=(-1)^n\Pf(d_0,0,\cdots,n-1,n^*,\cdots,0^*)
\end{align*}
with Pfaffian entries $\Pf(i,j)=\Pf(i^*,j^*)=\Pf(d_0,i)=0$, $\Pf(i,j^*)=I_{i,j}$ and $\Pf(d_0,i^*)=\alpha_i$. With these notations, the time evolution relations \eqref{2+1t} could be rewritten in terms of Pfaffian as
\begin{subequations}
\begin{align}
&\p_t\Pf(i,j^*)=\Pf(i+1,j^*)+\pf(i,j+1^*),\quad \p_t\Pf(d_0,i^*)=\Pf(d_0,i+1^*),\label{timet}\\
&\p_s\Pf(i,j^*)=\Pf(d_0,d_0^*,i,j^*)\label{times}
\end{align}\end{subequations}
with $\pf(d_0^*,i)=\alpha_i$ according to the symmetric property.
Then we discretise the time evolutions of the Pfaffian elements. To demonstrate it clearly, we denote the notations by the correspondence $(s,t)\mapsto(m,l)$. Using the discretisation technique, we could construct the discrete evolutions in direction $l$ as
\begin{align*}
&\Pf(i,j^*)^{m,l+1}=\Pf(i,j^*)^{m,l}-\Pf(i+1,j^*)^{m,l}-\Pf(i,j+1^*)^{m,l}+\Pf(i+1,j+1^*)^{m,l},\\
&\Pf(d_0,i^*)^{m,l+1}=\Pf(d_0,i+1^*)^{m,l}-\Pf(d_0,i^*)^{m,l},
\end{align*}
which could be regarded as the discretisation of relation \eqref{timet}. Regarding relation \eqref{times}, one could discretise it by
\begin{align*}
\Pf(i,j^*)^{m+1,l}=\Pf(d_0,d_0^*,i,j^*)^{m,l}-\Pf(i,j^*)^{m,l}.
\end{align*}
Therefore, if we define the discrete Pfaffian $\tau$-function as
\begin{align*}
\tau_N^{m,l}=\Pf(0,\cdots,N-1,N-1^*,\cdots,0^*)^{m,l},\quad \sigma_N^{m,l}=\Pf(d_0,0,\cdots,N-1,N^*,\cdots,0^*)^{m,l},
\end{align*}
and due to the addition formula for discrete Pfaffians \eqref{wronski1}-\eqref{wronski2}, one could further obtain the following relations
\begin{align*}
&\Pf(d_0,d_0^*,0,\cdots,N-1,N-1^*,\cdots,0^*)^{m,l}=\tau_N^{m+1,l}-\tau_N^{m,l},\\
&\Pf(c_0,c_0^*,0,\cdots,N-1,N-1^*,\cdots,0^*)^{m,l}=\tau_{N-1}^{m,l+1},\\
&\Pf(d_0,c_0,c_0^*,0,\cdots,N-1,N^*,\cdots,0^*)^{m,l}=\sigma_{N-1}^{m,l+1},
\end{align*}
where Pfaffian entries \begin{align*}
\text{$\Pf(c_0,i^*)^{m,l}=\Pf(c_0^*,i)^{m,l}=1$, $\Pf(c_0,c_0^*)^{m,l}=\Pf(c_0,i)^{m,l}=\Pf(c_0^*,i^*)^{m,l}=0$.}
\end{align*}
By using the Pfaffian identity \eqref{even}
with $\{\star\}=\{0,\cdots,N-1,N-1^*,\cdots,0^*\}$, one could get
\begin{align}\label{c1}
\tau_{N+1}^{m+1,l}\tau_N^{m,l}-\tau_N^{m+1,l}\tau_{N+1}^{m,l}=(\sigma_N^{m,l})^2
\end{align}
if $\{a_0,a_1,a_2,a_3\}=\{d_0,d_0^*,N,N^*\}$ is chosen. 
Moreover, taking $\{a_0,a_1,a_2,a_3\}=\{d_0,d_0^*,c_0,c_0^*\}$ and denoting a new variable
\begin{align*}
\xi_N^{m,l}=\Pf(c_0,d_0^*,0,\cdots,N,N^*,\cdots,0^*),
\end{align*}
one gets another equation
\begin{align}\label{c2}
\tau_{N}^{m,l}\tau_{N-1}^{m+1,l+1}-\tau_{N}^{m+1,l}\tau_{N-1}^{m,l+1}=(\xi_{N-1}^{m,l})^2.
\end{align}
It should be noted that $\xi_N^{m,l}$ is a new variable which seems not appear in the continuous equation \eqref{2+1}. To make it closed, it is obvious that we need more relations between the auxiliary variables $\xi$, $\sigma$ and the tau function $\tau$. 
Therefore, a higher order Pfaffian identity is needed at this stage. By using the identity \eqref{even}, and taking $\{\star\}$ as $\{0,\cdots,N-1,N-1^*,\cdots,0^*\}$ and $\{a_0,\cdots,a_5\}$ as $\{d_0,d_0^*,c_0,c_0^*,N,N^*\}$, the identity demonstrates the following relation
\begin{align}\label{c3}
\tau_{N}^{m+1,l+1}\tau_N^{m,l}-\tau_N^{m+1,l}\tau_N^{m,l+1}=\sigma_N^{m,l}\sigma_{N-1}^{m,l+1}-\xi_N^{m,l}\xi_{N-1}^{m,l}.
\end{align}
With the help of equations \eqref{c1}, \eqref{c2} and \eqref{c3}, one can derive the following equation
\begin{align}\label{c4}
\tau_{N+1}^{m+1,l}\tau_{N-1}^{m,l+1}-\tau_{N+1}^{m,l}\tau_{N-1}^{m+1,l+1}=\sigma_N^{m,l}\sigma_{N-1}^{m,l+1}+\xi_N^{m,l}\xi_{N-1}^{m,l}.
\end{align}
Therefore, by eliminating the term $\sigma_{N}^{m,l}\sigma_{N-1}^{m,l+1}$ from equations \eqref{c3} and \eqref{c4}, it results in the equation
\begin{align}\label{c5}
\tau_{N+1}^{m+1,l}\tau_{N-1}^{m,l+1}-\tau_{N+1}^{m,l}\tau_{N-1}^{m+1,l+1}-\tau_N^{m+1,l+1}\tau_N^{m,l}+\tau_N^{m+1,l}\tau_N^{m,l+1}=2\xi_N^{m,l}\xi_{N-1}^{m,l}.
\end{align}
It is remarkable equations \eqref{c2} and \eqref{c5} could be regarded as a discretisation of $2+1$ dimensional C-Toda lattice \eqref{2+1}. Furthermore, these two equations could be written in terms of $\tau$ only, and it results in
\begin{align*}
&4(\tau_{N+1}^{m,l}\tau_N^{m+1,l+1}-\tau_{N+1}^{m+1,l}\tau_N^{m,l+1})(\tau_N^{m,l}\tau_{N-1}^{m+1,l+1}-\tau_N^{m+1,l}\tau_{N-1}^{m,l+1})\\&=(\tau_N^{m,l}\tau_N^{m+1,l+1}+\tau_{N+1}^{m,l}\tau_{N-1}^{m+1,l+1}-\tau_{N-1}^{m,l+1}\tau_{N+1}^{m+1,l}-\tau_N^{m+1,l}\tau_N^{m,l+1})^2.
\end{align*}
This equation is equivalent to the discrete CKP equation \cite{bobenko15,schief03}
\begin{align*}
&4(\hat{\tau}_{N+1}^{m+1,l}\hat{\tau}_{N+1}^{m,l+1}-\hat{\tau}_{N+1}^{m,l}\hat{\tau}_{N+1}^{m+1,l+1})(\hat{\tau}_{N}^{m+1,l}\hat{\tau}_{N}^{m,l+1}-\hat{\tau}_{N}^{m,l}\hat{\tau}_{N}^{m+1,l+1})\\
&=(\hat{\tau}_{N}^{m,l}\hat{\tau}_{N+1}^{m+1,l+1}+\hat{\tau}_{N+1}^{m,l}\hat{\tau}_{N}^{m+1,l+1}-\hat{\tau}_{N}^{m,l+1}\hat{\tau}_{N+1}^{m+1,l}-\hat{\tau}_{N}^{m+1,l}\hat{\tau}_{N+1}^{m,l+1})^2
\end{align*}
via the affine transformation $\hat{\tau}_{N}^{m,l}\mapsto \tau_{N-l}^{m,l}$. To conclude, we state the following proposition.
\begin{proposition}
The discrete C-Toda (dCKP) lattice 
\begin{align}
\begin{aligned}\label{2+1d}
&4(\tau_{N+1}^{m,l}\tau_N^{m+1,l+1}-\tau_{N+1}^{m+1,l}\tau_N^{m,l+1})(\tau_N^{m,l}\tau_{N-1}^{m+1,l+1}-\tau_N^{m+1,l}\tau_{N-1}^{m,l+1})\\&=(\tau_N^{m,l}\tau_N^{m+1,l+1}+\tau_{N+1}^{m,l}\tau_{N-1}^{m+1,l+1}-\tau_{N-1}^{m,l+1}\tau_{N+1}^{m+1,l}-\tau_N^{m+1,l}\tau_N^{m,l+1})^2.
\end{aligned}
\end{align}
admits the following solution
\begin{align*}
\tau_N^{m,l}=\Pf(0,\cdots,N-1,N-1^*,\cdots,0^*)^{m,l},
\end{align*}
where the Pfaffian elements satisfy
\begin{align*}
&\Pf(i,j^*)^{m,l+1}=\Pf(i,j^*)^{m,l}-\Pf(i+1,j^*)^{m,l}-\Pf(i,j+1^*)^{m,l}+\Pf(i+1,j+1^*)^{m,l},\\
&\Pf(i,j^*)^{m+1,l}=\Pf(d_0,d_0^*,i,j^*)^{m,l}-\Pf(i,j^*)^{m,l},\,\Pf(d_0,i^*)^{m,l+1}=\Pf(d_0,i+1^*)^{m,l}-\Pf(d_0,i^*)^{m,l}.
\end{align*}
\end{proposition}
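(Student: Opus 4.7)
The plan is to follow the derivation sketched in the paragraphs leading up to the statement, which already lays out the essential calculation; the proof consists of assembling the pieces into a coherent whole. First, I would introduce the auxiliary Pfaffian labels $c_0, c_0^*, d_0, d_0^*$ with the prescribed entries $\Pf(c_0,i^*)^{m,l}=\Pf(c_0^*,i)^{m,l}=1$, $\Pf(c_0,c_0^*)^{m,l}=\Pf(c_0,i)^{m,l}=\Pf(c_0^*,i^*)^{m,l}=0$, together with the analogous $d_0,d_0^*$ entries from the continuous picture. Applying the Wronski-type addition formulae \eqref{wronski1}-\eqref{wronski2} to the prescribed discrete evolutions of $\Pf(i,j^*)^{m,l}$ and $\Pf(d_0,i^*)^{m,l}$, I would represent each shifted tau-function as a Pfaffian over an extended label set sharing the common core $\{\star\}=\{0,\ldots,N-1,N-1^*,\ldots,0^*\}$. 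This produces the representations of $\tau_N^{m+1,l}$, $\tau_{N-1}^{m,l+1}$, $\tau_N^{m+1,l+1}$, the auxiliary $\sigma_N^{m,l}$, and the new auxiliary quantity $\xi_N^{m,l}:=\Pf(c_0,d_0^*,0,\ldots,N,N^*,\ldots,0^*)^{m,l}$ that all sit on the same Pfaffian core.

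Second, I would invoke the Pfaffian bilinear identity \eqref{even} on this common core with three distinct choices of extra labels. Taking $\{a_0,a_1,a_2,a_3\}=\{d_0,d_0^*,N,N^*\}$ gives \eqref{c1}; taking $\{a_0,a_1,a_2,a_3\}=\{d_0,d_0^*,c_0,c_0^*\}$ yields \eqref{c2} and introduces $\xi$; and the six-label enlargement $\{a_0,\ldots,a_5\}=\{d_0,d_0^*,c_0,c_0^*,N,N^*\}$ yields \eqref{c3}, tying $\tau, \sigma, \xi$ together through a higher-order expansion. Combining \eqref{c1}, \eqref{c2}, and \eqref{c3} by straightforward algebraic elimination produces \eqref{c4}, and subtracting \eqref{c3} from \eqref{c4} cancels the $\sigma_N\sigma_{N-1}$ term and leaves \eqref{c5}, expressing $2\xi_N\xi_{N-1}$ as a linear combination of $\tau$-products.

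Finally, to remove $\xi$ altogether I would square \eqref{c5} and use \eqref{c2} evaluated at both indices $N$ and $N-1$ to rewrite $(\xi_N^{m,l}\xi_{N-1}^{m,l})^2$ as a product of two $\tau$-Jacobians. After collecting terms this yields exactly \eqref{2+1d}, and the Pfaffian ansatz is verified. The main obstacle is the bookkeeping in the higher-order expansion \eqref{c3}: one must enumerate the pairings arising from the six-label Pfaffian identity and carefully match each resulting sub-Pfaffian to the correct $\tau$, $\sigma$, or $\xi$ representation, keeping track of sign conventions induced by the antisymmetry of Pfaffian labels and the reversed orderings in the $\{\star\}$ set. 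Once this expansion is done cleanly, the remaining elimination from \eqref{c2}-\eqref{c5} to \eqref{2+1d} is a short and mechanical manipulation.
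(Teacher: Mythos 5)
Your proposal is correct and follows essentially the same route as the paper: the derivation of \eqref{c1}, \eqref{c2}, \eqref{c3} from the bilinear identity \eqref{even} with the three indicated label choices, the elimination leading to \eqref{c4} and \eqref{c5}, and the final squaring of \eqref{c5} combined with \eqref{c2} at indices $N$ and $N-1$ is precisely the argument the paper gives in the text preceding the proposition. The only point worth noting is that your last step correctly identifies the sign cancellation (the bracket in \eqref{2+1d} is the negative of the left side of \eqref{c5}, which is harmless under squaring), so no gap remains.
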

The equation \eqref{2+1d} is still an $8$-point scheme, whose lattice points are the same with those given in \eqref{btoda}. Moreover, the discrete evolutions of the Pfaffian elements are almost the same with \eqref{de}. 
 These facts imply that there should be some unified relation between these eight vertices, and it may lead to different lattice equations when different interactions are imposed.

\section{Concluding remarks}
In this work, we mainly proposed two condensation algorithms for Pfaffians and relaxation factors are introduced in the algorithms as well. As we've shown, the computational cost of the algorithm is of $O(N^4)$ floating operations, and therefore,  the importance of the algorithms doesn't lie in themselves, but a better understanding of the interactions between discrete integrable systems and algebraic combinatorics.
The relaxation factors in the iteration processes \eqref{relax1} and \eqref{relax2} could not result in the $\lambda$-Pfaffian if we consider the exact value of $\tau_{2n}^{0,0}$. Therefore, how to correctly propose the concept of $\lambda$-Pfaffian and how to relate it to the combinatoric objects are still unknown to us; we leave it for our future study.

Another interesting question is about the reduction to $1$-dimensional lattice. For example, as the Toda lattice can be iterated and its solution can be explicitly expressed as determinant, if one consider the reduction $\tau_n^{k,l}:=\hat{\tau}_{n+2k+2l}$ where $\tau_n^{k,l}$ is the tau-function in \eqref{toda},
then $\{\hat{\tau}_\ell\}$ satisfy a Somos-4 sequence
\begin{align*}
\hat{\tau}_{\ell}\hat{\tau}_{\ell+4}=\hat{\tau}_{\ell+1}\hat{\tau}_{\ell+3}+\hat{\tau}_{\ell+2}^2.
\end{align*}
The exact Hankel determinant solution was given by \cite{chang12,hone19}. The solution of the Somos-5 sequence is then given by \cite{chang15} and the key idea is to make use of the B\"acklund transformation of the Somos-4 sequence. It can be made as a reduction from the fully discrete Lotka-Volterra lattice.
Although a sigma-function solution of the Somos-6 sequence was given in \cite{fedorov16}, it is still unknown for us about the explicit Pfaffian solution. The choice of $\tau_n^{k,l}=\hat{\tau}_{n+2k+4l}$ gives the reduction to the Somos-6 sequence
\begin{align*}
\hat{\tau}_{\ell+6}\hat{\tau}_\ell=\hat{\tau}_{\ell+5}\hat{\tau}_{\ell+1}-\hat{\tau}_{\ell+4}\hat{\tau}_{\ell+2}+\hat{\tau}_{\ell+3}^2.
\end{align*}
Moreover, for the generalised Lotka-Volterra lattice \eqref{glv}, if we set $\tau_n^{k,l}:=\hat{\tau}_{n+3k+5l-1}$, then the Somos-7 sequence
\begin{align*}
\hat{\tau}_{\ell+7}\hat{\tau}_\ell=\hat{\tau}_{\ell+6}\hat{\tau}_{\ell+1}+\hat{\tau}_{\ell+5}\hat{\tau}_{\ell+2}-\hat{\tau}_{\ell+4}\hat{\tau}_{\ell+3}
\end{align*}
will be obtained. It is our future work to obtain the explicit Pfaffian solution for these Somos-6 and Somos-7 sequences.

\section*{Acknowledgement}
The author would like to thank Prof. Andrew N. W. Hone, Xing-Biao Hu and Dr. Xiang-Ke Chang for helpful discussions and comments. ARC Centre of Excellence for Mathematical and Statistical Frontiers (ACEMS) is appreciated for the financial support.

\small
\bibliographystyle{abbrv}

\end{document}